\def\1{{\bf 1}}
\def\R{{\mathbb R}}
\def\e{{\mathrm e}}
\begin{document}

\title*{Scattering on leaky wires in dimension three}
\titlerunning{Scattering on leaky wires in dimension three}
\author{Pavel Exner and  Sylwia Kondej}
\authorrunning{Scattering on leaky wires in dimension three}
\institute{P.~Exner  \at {Department of Theoretical Physics, Nuclear Physics Institute, Czech Academy of Sciences, 25068 \v Re\v z near Prague, Czechia, and Doppler Institute for Mathematical Physics and Applied Mathematics, Czech Technical University, B\v rehov\'a 7, 11519 Prague, Czechia }\email{ exner@ujf.cas.cz} \\  S.~Kondej  \at {Institute of Physics, University of Zielona G\'ora, ul.\ Szafrana 4a,  65246 Zielona G\'ora, Poland }\email{ s.kondej@if.uz.zgora.pl} }

\maketitle

\abstract{We consider the scattering problem for a class of strongly singular Schr\"odinger operators in $L^2(\R^3)$ which can be formally written as $H_{\alpha,\Gamma}= -\Delta + \delta_\alpha(x-\Gamma)$ where $\alpha\in\R$ is the coupling parameter and $\Gamma$ is an infinite curve which is a local smooth deformation of a straight line $\Sigma\subset\R^3$. Using Kato-Birman method we prove that the wave operators $\Omega_\pm(H_{\alpha,\Gamma}, H_{\alpha,\Sigma})$ exist and are complete.
}


\section{Introduction}\label{s:intro}
\setcounter{equation}{0}

It is often said that when a great scientist says that something can be done, it can be done, while if the claim is that it cannot be done, he or she is usually wrong; sooner or later a younger one will come and do the impossible work earning a deserved fame. A nice illustration of this effect can be found in the biography of \emph{Tosio Kato} to the centenary of whom the present volume is dedicated. There are various testimonies \cite{Si18a} that John von Neumann who otherwise did so much for the mathematical foundations of quantum mechanics discouraged people from attempts to prove the self-adjointness of atomic Hamiltonians because he considered the task hopelessly beyond reach. Kato's elegant and in the matter of fact simple proof \cite{Ka51} was a starting point of the rigorous theory of Schr\"odinger operators which in the subsequent decades brought a plethora of results and managed to address fundamental questions such as those concerning the stability of matter \cite{LS}.

While this may be arguable the most important result of Tosio Kato, from the other point of view it is just one item in the broad spectrum of his achievements which extended also to some less well know directions \cite{Ok}. We believe that other contributions to this volume, in combination with recent reviews \cite{Si18a, Si18b} will provide a full picture showing how much mathematical physics owes to him. Different people may have different preferences but his \emph{opus magnum}, the monograph \cite{Ka}, will probably come to everybody's mind first. As anybody in the field, the present authors use it regularly and employed also other Kato's results, for instance his contribution to the theory of product formul{\ae} \cite{Ka78} that inspired us in the discussion of quantum Zeno dynamics \cite{EI05, EIK04, EINZ07}.

The result we are going to present in this note is based on a method at the origin of which Kato left his footprint and which bears his name together with that of Mikhail Birman. The starting point here were two of his 1957 papers \cite{Ka57a, Ka57b} which, together with the paper by Rosenblum \cite{Ro57} were the origin of the trace-class perturbation theory. Later substantial contributions were made by the others, the said Birman, Kuroda, Putnam, and Pearson, to name just the main ones --- for a description of the history we refer to the review \cite{Si18b} or the note to Sect.~XI.3 in \cite{RS} --- but the starting point was here.

In this note we are interested in Schr\"odinger operators with singular `potentials' supported by zero measure sets. In recent years they were studied as models of `leaky' quantum wires and networks made of them, cf.~\cite[Chap.~10]{EK} for an introduction to the subject and a bibliography. From the mathematical point of view the parameter which matters is the codimension of the interaction support. If the latter is one the operators can be treated naturally using the associated quadratic forms in the spirit of \cite{BEKS94}, for codimension two the problem is more subtle. The scattering problem in the codimension one case was discussed in \cite{EK05} where we considered the situation where the singular interaction support is a curve $\Gamma$ in the plane, or more generally a family of curves, which can regarded as a local deformation of a single straight line $\Sigma$. Under suitable regularity assumptions we proved there the existence and completeness of the wave operators.

In the present note we address a similar question in the codimension two case, for simplicity we restrict ourselves to the situation when $\Gamma$ is a single curve in $\R^3$ being a smooth local deformation of a straight line. Note that the scattering problem with singular interactions supported by curves in $\R^3$ has been considered recently\footnote{See also recent related results in \cite{CFP18, MP17}.} in \cite{BFKLR17}, however, our task here is different. The curves in the said paper were supposed to be finite and the Hamiltonian was compared to the one describing the free motion in the three-dimensional space, $-\Delta$ with the usual mathematical-physics license concerning the units. In our case the comparison operator can be formally written as $H_{\alpha,\Sigma}= -\Delta + \delta_\alpha(x-\Sigma)$. Using separation of variables and the well known result about two-dimensional point interactions \cite{AGHH}, we find easily that the spectrum is $[\xi_\alpha,\infty)$ where $\xi_\alpha<0$ is given by (\ref{eq-definitionxi}) below; in addition to motion at positive energies the system has now a guided mode in which the particle can move remaining localized in the vicinity of $\Sigma$. The scattering will now mean a comparison between $H_{\alpha,\Sigma}$ and the `full' Hamiltonian formally written as
\begin{equation}\label{eq-Hamiltonianformal}
  H_{\alpha, \Gamma}= \Delta + \delta_\alpha (x-\Gamma)\,;
\end{equation}
a rigorous definition of these operators will be given in the next section. Our aim is to show that the scattering is well defined in this setting, in other words, that the wave operators for the given pair exist and are complete. In fact, the wave operators are also asymptotically complete as one has $\sigma_\mathrm{sc} (H_{\alpha, \Gamma})=\emptyset$ under our assumptions, but we leave the proof of this property together with extensions of the result to a wider class of the interaction supports $\Gamma$ to a subsequent paper.

\section{Preliminaries: the operator}\label{s:prelim}
\setcounter{equation}{0}

First we have to introduce the main notions. Let $\Gamma \subset \R^3$ be an infinite curve of class $C^1$ and piecewise $C^2$ which coincides asymptotically with a straight line $\Sigma$ in a sense to be made precise below. With the usual abuse of notation we regard $\Gamma$ both as a map $\R\to\R^3$ and its image. Furthermore, without loss of generality we may fix $\Sigma := \{(x_1,0,0)\,:\,x_1 \in \R \}$ and to parametrize $\Gamma$ by its arc length $s$ so that we have $|\dot\Gamma|=1$ and $|\Gamma(s)-\Gamma(s')|\le|s-s'|$. We will also suppose that
 \begin{equation} \label{no-u}
\exists c\in (0,1) \quad \text{such that} \quad |\Gamma(s)-\Gamma(s')|\ge c|s-s'|\quad \text{for} \quad \forall s,s'\in\R\,,
 \end{equation}
which means, in particular, that the curve $\Gamma$ has no self-intersections and that it cannot be of a U-shape form.

Our next task is to introduce the Hamiltonian, which will be a singular Schr\"odinger operator with an interaction supported by the curve $\Gamma$, in other words a singular perturbation of the `free' operator $H_0$ which is the Laplacian in $L^2(\R^3)$ with the natural domain. There are  various ways to do that using, for instance, quadratic forms or a Krein-type formula \cite{BFKLR17, BT92, Pos01, Te90}. For the purpose of the present paper we recall the method employed in \cite{EK02} inspired by the classical definition of the two-dimensional $\delta$ interaction \cite{AGHH}; its advantage is that it has a local character. The curve regularity allows us to associate with $\Gamma$, apart from a discrete set, the Frenet's frame, i.e. the triple $(t(s),b(s),n(s))$ of the tangent, binormal and normal vectors, which are by assumption piecewise continuous functions of $s$. Moreover, at the discontinuity points of $\ddot\Gamma$ the Frenet frame limits from the two sides differ by a rotation around the tangent vector, hence one can construct a globally smooth coordinate system and, with an abuse of notation, employ the symbols $b(s),n(s)$ for the rotated binormal and normal, respectively.

Using this system, we may further introduce a family of `shifted' curve: given $\xi ,\eta \in \R$ we denote $r=(\xi ^{2}\!+\!\eta^{2})^{1/2}$ and set
 $$ 
 \Gamma^{\xi \eta } _{r}:= \{\,\Gamma ^{\xi \eta }_{r}(s):=\Gamma (s)+\xi b(s)+\eta n(s)\,,\:s\in\R\}\,,
 $$ 
in view of \eqref{no-u} and the smoothness of $\Gamma$ there is an $r_{0}>0$ such that $\Gamma^{\xi \eta }_{r}\cap \Gamma =\emptyset $ holds for all $r< r_{0}$. This allows us to define generalized boundary values of a function $f\in H_\mathrm{loc}^{2}(\R^{3}\setminus \Gamma )$ using its restriction ${f\!\upharpoonright}_{\Gamma^{\xi\eta}_{r}}(s)$ to $\Gamma ^{\xi\eta}_{r}$ which is by assumption well defined for $0<r<r_0$. We shall say that a function $f\in H_\mathrm{loc}^{2}(\R^{3}\setminus \Gamma )\cap L^{2}(\R^{3})$\ belongs to $\Upsilon$ if the limits
 \begin{align*}
 \Xi (f)(s) &:= -\lim_{r \to 0}\frac{1}{\ln r }
 {f\!\upharpoonright}_{\Gamma^{\xi \eta }_{r}}(s), \\
 \Omega (f)(s) &:= \lim_{r \to 0}
 \left[{f\!\upharpoonright}_{\Gamma^{\xi \eta }_{r}} (s)+\Xi (f)(s)\ln r\right]
 \end{align*}
exist a.e. in $\R$ independently of the direction $\frac{1}{r}(\xi,\eta)$ in which they are taken and belong to $ L^{2}(\R)$ as functions of $s$. This makes it possible to define the sought singular Schr\"odinger operator as the restriction of the Laplacian acting on $\R^3\setminus\Gamma$ to a suitable subset of $\Upsilon$.

To be specific, we fix a nonzero $\alpha\in\R$ and define the operator $H_{\alpha,\Gamma}$ as follows,
 \begin{subequations}\label{bcond3}
 \begin{align}
 D(H_{\alpha,\Gamma}) &=
 \Upsilon _{\alpha }:= \{\,g\in \Upsilon :\:2\pi \alpha \Xi
 (g)(s)=\Omega (g)(s) \,\}, \\
 H_{\alpha,\Gamma}f &= -\Delta f \quad\mathrm{for}
 \quad x \in \R^{3}\setminus \Gamma.
 \end{align}
 \end{subequations}
It was shown in \cite{EK02} that such an operator is self-adjoint. Note that the absence of a singular interaction means that the singular boundary value $\Xi(f)$ vanishes identically, in other words, the free operator $H_0$ corresponds to $\alpha=\infty$. This fact leads some authors to write the operator in question as $-\Delta -\frac{1}{\alpha}\delta(\cdot-\Gamma)$, see e.g. \cite{BFKLR17}. This, however, does not fit well with the fact that the two-dimensional $\delta$ interaction is `always attractive', hence we avoid such formal expressions showing the interaction `strength' and restrict ourselves to the definition \eqref{bcond3} in the spirit of \cite[Sec.~I.5]{AGHH}.

Before proceeding further, let us say a few words about the spectrum of $H_{\alpha,\Gamma}$. If the interaction support is a straight line, $\Gamma= \Sigma$, one finds it easily by separation of variables: it is absolutely continuous and equal to
 $$ 
 \sigma (H_{\alpha,\Gamma})= [\xi_\alpha,\infty),
 $$ 
where
\begin{equation}\label{eq-definitionxi}
  \xi_\alpha = -4\,\e^{2(-2\pi \alpha +\psi(1))}
\end{equation}
is the eigenvalue of the corresponding one-center two-dimensional $\delta$ interaction, with $-\psi(1)\approx 0.57721$ being the Euler-Mascheroni constant. For a non-straight $\Gamma$ the spectrum of $H_{\alpha,\Gamma}$ may be different and depends in general on the geometry of $\Gamma$. One of the interesting situations concerns curves that are asymptotically straight. In \cite{EK02}, for instance, we assumed that there are $\omega \in (0,1)$, $\mu \geq 0$ and $\varepsilon,d>0 $ such that for all $(s,s^{\prime })\in S_{\omega ,\varepsilon }$ we have
 \begin{equation} \label{a2}
 1-\frac{|\Gamma(s)-\Gamma(s^{\prime })|}{|s-s^{\prime}|}\leq
 \frac{d|s-s^{\prime}|}{(|s-s^{\prime}|\!+\!1)
 (1+(s^{2}\!+\!s^{\prime 2})^{\mu })^{1/2}},
 \end{equation}
where $S_{\omega ,\varepsilon }$ is the subset of $\R^2$ consisting of points $(s,s')$ satisfying $\omega < \frac{s}{s^{\prime}} <\omega^{-1}$ if $|s+s^{\prime}| > \varepsilon\, \frac{1+\omega}{1-\omega}$ and $|s-s^{\prime}| <\varepsilon$ if $|s+s^{\prime}| < \varepsilon\, \frac{1+\omega}{1-\omega}$. If this assumption is satisfied with some $\mu>\frac12$, together with \eqref{no-u}, then the essential spectrum is preserved,
$$
\sigma_\mathrm{ess}(H_{\alpha,\Gamma}) =[\xi_\alpha,\infty ),
$$
and in addition, the operator $H_{\alpha,\Gamma}$ has a non-void discrete spectrum whenever the deformation is nontrivial, $\Gamma\ne\Sigma$.

\section{Preliminaries: the resolvent}\label{s:prelim}
\setcounter{equation}{0}

In what follows we adopt a more restrictive assumption about the curve, namely we suppose that there exists a compact set
$M\subset\R^3$ such that
\begin{equation} \label{compact}
\Gamma \setminus \Sigma  \subset M\,.
\end{equation}
To analyze the scattering problem for the pair $(H_{\alpha , \Gamma },H_{\alpha,\Sigma})$ we need to know more about the resolvent of singular Schr\"odinger operator \eqref{bcond3}. In analogy with the considerations of \cite{EK02} we begin from the embedding of the free resolvent $R^z:= (-\Delta -z )^{-1}\,:\, L^2 (\R^3) \to W^{2,2} (\R^3 )$ to $L^2 (\Gamma )$. It is sufficient to restrict the spectral parameter $z$ to negative real values, hence we consider $z=-\kappa^2$ with $\kappa >0$ and denote $\mathbf{R}^{\kappa} = R^{-\kappa^2}$. It is well known that $\mathbf{R}^\kappa $ is integral operator with the kernel determined by the function
$$
G^\kappa  (x):= \frac{\e^{-\kappa |x|}}{4\pi |x|}\,.
$$
Specifically, $\mathbf{\breve {R} }^\kappa _\Gamma\,:\, L^{2} (\Gamma )\to L^2 (\R^3  )$ acts as $\mathbf{\breve{R}}^\kappa _\Gamma f:= \int_{\R^3}G^\kappa  (\cdot -x) f (x) \delta (x-\Gamma ) \mathrm{d}x$, and furthermore, we define $\mathbf{R}_\Gamma ^\kappa \,:\,  L^2 (\R^3) \to L^2 (\Gamma)$ as the adjoint of $\mathbf{\breve {R} }^\kappa _\Gamma$. To find out the resolvent of $H_{\alpha, \Gamma }$, we define the operator
$\check{\mathbf{T}}^\kappa \,:\, L^2 (\Gamma) \to L^{2} (\Gamma )$ by
\begin{equation}\label{eq-checkT}
(\check{\mathbf{T}}^\kappa  f)(s) = - \frac{1}{(2\pi)^2}\int_{\R} \ln (p^2 +\kappa^2)^{1/2}\, \e^{ips} \hat{f}(p)\, \mathrm{d}p,
\end{equation}
where $\hat{f}$ stands for the Fourier transform of $f$ and the maximal domain of this operator is $D(\check{\mathbf{T}}^\kappa  )=\{f\,:\, \check{\mathbf{T}}^\kappa f\in L^2 (\Gamma )\}$. Furthermore, we set
\begin{equation}\label{eq-T}
\mathbf{T}^\kappa = \check{\mathbf{T}}^\kappa + \frac{1}{2\pi} (\ln 2 +\psi(1)),
\end{equation}
where $\psi(1)$ is, up to the sign, the Euler-Mascheroni constant mentioned above. Finally, we define the integral operator $\mathbf{B}^\kappa \,:\, L^2 (\Gamma ) \to L^2 (\Gamma )$ with the kernel of the form
$$
\mathbf{B}^\kappa  (s,s'):= G^\kappa (\Gamma (s)-\Gamma (s'))- G^\kappa (s-s'),
$$
and the operator
\begin{equation} \label{Krein}
\mathbf{Q}^\kappa := \mathbf{T}^\kappa +\mathbf{B}^\kappa \,:\, D:= D(\check{\mathbf{T}}^\kappa ) \to L^2(\Gamma )\,.
\end{equation}
Note that the operator $\mathbf{B}^\kappa$ is positive because the function $G^\kappa$ is monotonous and by assumption we have $|\Gamma (s)-\Gamma (s')|\le|s-s'|$; this fact was crucial in \cite{EK02} to prove that a curve deformation gives rise to the existence of a discrete spectrum of $H_{\alpha,\Gamma}$. By \cite[Thm.~2.1]{EK02} the operator $\alpha -\mathbf{Q}^\kappa \,:\, L^2 (\Gamma )\to L^2 (\Gamma )$ is invertible for all $\kappa$ large enough and
\begin{equation} \label{resolvent}
\mathbf{R}^\kappa_{\alpha, \Gamma} =  \mathbf{R}^\kappa + \mathbf{\breve{R} }_\Gamma^\kappa(\alpha -\mathbf{Q}^\kappa )^{-1}  \mathbf{R }_\Gamma^\kappa
\end{equation}
is the resolvent of $H_{\alpha, \Gamma }$. It is not by a chance that this expression has a Krein-like form because $H_{\alpha, \Gamma }$ is a self-adjoint extension of the symmetric operator $-\Delta \,: \, C^\infty_0 (\R^3 \setminus \Gamma )\to L^2(\R^3)$. Note also that the geometric perturbation is encoded in the part $\mathbf{B}^\kappa$ of \eqref{Krein}: we have $\mathbf{B}^\kappa=0$ if $\Gamma =\Sigma $, and consequently, $\mathbf{Q}^\kappa =\mathbf{T}^\kappa$ holds in this case.  The resolvent expression \eqref{resolvent} is a tool to prove the spectral properties of $H_{\alpha, \Gamma }$ mentioned at the end of the preceding section.

\begin{lemma}
The operator $(\alpha -\mathbf{Q}^\kappa )^{-1}$ is bounded for all $\kappa $ large enough.
\end{lemma}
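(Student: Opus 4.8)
The plan is to show that for $\kappa$ large enough the self-adjoint operator $\mathbf{Q}^\kappa$ lies, as a quadratic form, strictly below $\alpha$; then $\alpha-\mathbf{Q}^\kappa$ is self-adjoint and bounded below by a positive multiple of the identity, hence has an everywhere defined bounded inverse whose norm is controlled by that gap. To begin, I would identify $L^2(\Gamma)$ with $L^2(\R)$ via the arc-length parametrization; then \eqref{eq-checkT} exhibits $\check{\mathbf{T}}^\kappa$ as the Fourier multiplier with the real symbol $p\mapsto-\tfrac{1}{4\pi}\ln(p^2+\kappa^2)$, which is therefore self-adjoint on its maximal domain $D$ --- a domain independent of $\kappa$, since two such symbols differ by a bounded function. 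The symbol attains its supremum at $p=0$, so in view of \eqref{eq-T} one obtains
\begin{equation*}
\mathbf{T}^\kappa\le m_\kappa\,\mathbf 1,\qquad m_\kappa:=\frac{1}{2\pi}\ln\frac{2\,\e^{\psi(1)}}{\kappa}\xrightarrow{\ \kappa\to\infty\ }-\infty\,;
\end{equation*}
note that $m_\kappa<\alpha$ is nothing but the condition $-\kappa^2<\xi_\alpha$ familiar from \eqref{eq-definitionxi}, i.e. that the spectral parameter lies below $\sigma(H_{\alpha,\Sigma})$.

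The heart of the matter is to bound $\mathbf{B}^\kappa$ uniformly in $\kappa$; self-adjointness of $\mathbf{B}^\kappa$ is immediate from the symmetry of its kernel. The efficient route is a $\kappa$-free domination of the kernel: writing $G^\kappa(t)=(4\pi t)^{-1}-(4\pi t)^{-1}(1-\e^{-\kappa t})$ and using that both $t\mapsto t^{-1}\e^{-\kappa t}$ and $t\mapsto t^{-1}(1-\e^{-\kappa t})$ are decreasing, together with $|\Gamma(s)-\Gamma(s')|\le|s-s'|$, one finds
\begin{equation*}
0\le\mathbf{B}^\kappa(s,s')\le\mathbf{B}^0(s,s'):=\frac{1}{4\pi}\left(\frac{1}{|\Gamma(s)-\Gamma(s')|}-\frac{1}{|s-s'|}\right),
\end{equation*}
so it suffices to see that $\mathbf{B}^0$ is bounded. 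Near the diagonal the singularities of the two terms cancel: since $\Gamma$ is $C^1$ and piecewise $C^2$ and parametrized by arc length, $|\Gamma(s)-\Gamma(s')|=|s-s'|\,(1+O(|s-s'|))$, whence $\mathbf{B}^0(s,s')=O(1)$ there. Away from the diagonal, \eqref{no-u} gives $|\Gamma(s)-\Gamma(s')|\ge c|s-s'|$, while \eqref{compact} --- together again with \eqref{no-u}, to exclude a curve doubling back --- makes $|s-s'|-|\Gamma(s)-\Gamma(s')|$ uniformly bounded, so $\mathbf{B}^0(s,s')=O(|s-s'|^{-2})$, which is integrable off the diagonal. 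A Schur test with these two estimates then shows that $\mathbf{B}^0$, and hence every $\mathbf{B}^\kappa$, is bounded with a common bound $b:=\|\mathbf{B}^0\|$; in fact $\|\mathbf{B}^\kappa\|\to0$ as $\kappa\to\infty$ by dominated convergence, although this will not be needed.

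Assembling the pieces, $\mathbf{Q}^\kappa=\mathbf{T}^\kappa+\mathbf{B}^\kappa$ is self-adjoint on $D$ as a bounded self-adjoint perturbation of $\mathbf{T}^\kappa$, and $\mathbf{Q}^\kappa\le(m_\kappa+b)\,\mathbf 1$ in the form sense. Since $m_\kappa+b\to-\infty$, one may pick $\kappa_0$ with $m_\kappa+b<\alpha$ for all $\kappa\ge\kappa_0$; then $\alpha-\mathbf{Q}^\kappa\ge(\alpha-m_\kappa-b)\,\mathbf 1>0$, so $0\notin\sigma(\alpha-\mathbf{Q}^\kappa)$ and
\begin{equation*}
\bigl\|(\alpha-\mathbf{Q}^\kappa)^{-1}\bigr\|\le\frac{1}{\alpha-m_\kappa-b}<\infty\qquad\text{for all }\ \kappa\ge\kappa_0\,,
\end{equation*}
which is the assertion (and, incidentally, this bound tends to $0$ as $\kappa\to\infty$). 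The one genuinely non-trivial ingredient is the $\kappa$-uniform boundedness of $\mathbf{B}^\kappa$, where the assumptions \eqref{no-u} and \eqref{compact} do the work; the remaining steps are soft.
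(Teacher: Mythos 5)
Your argument is correct, but it is not the route the paper takes, so let me compare. The paper works with norm estimates and borrowed facts: it computes $\|\check{\mathbf{T}}^\kappa f\|$ explicitly to get the lower bound $\|\mathbf{T}^\kappa f\|\ge C\ln\kappa\,\|f\|$, quotes \cite[Lemma~5.3]{EK02} for $\mathbf{B}^\kappa$ being Hilbert--Schmidt (hence bounded), deduces $\|(\alpha-\mathbf{Q}^\kappa)f\|\ge C\ln\kappa\,\|f\|$, and then uses the separately known fact from \cite[Thm.~2.1]{EK02} that the inverse \emph{exists} to conclude it is bounded. You instead exploit the order structure: the multiplier symbol of $\check{\mathbf{T}}^\kappa$ read off \eqref{eq-checkT} (with the paper's Fourier normalization, consistent with \eqref{eq-definitionxi}) gives the form bound $\mathbf{T}^\kappa\le m_\kappa$ with $m_\kappa\to-\infty$, you prove the $\kappa$-uniform bound $0\le\mathbf{B}^\kappa(s,s')\le\mathbf{B}^0(s,s')$ and control $\mathbf{B}^0$ by a Schur test --- the near-diagonal cancellation from the $C^1$/piecewise-$C^2$ regularity and the off-diagonal bound on $|s-s'|-|\Gamma(s)-\Gamma(s')|$, which indeed requires both \eqref{no-u} and \eqref{compact} --- and then a spectral-gap argument for the self-adjoint $\mathbf{Q}^\kappa$ yields existence \emph{and} boundedness of $(\alpha-\mathbf{Q}^\kappa)^{-1}$ in one stroke, with the explicit estimate $\|(\alpha-\mathbf{Q}^\kappa)^{-1}\|\le(\alpha-m_\kappa-b)^{-1}$. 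What each buys: the paper's proof is shorter because the hard work on $\mathbf{B}^\kappa$ and the invertibility are imported from \cite{EK02}, and it records the stronger Hilbert--Schmidt property; yours is self-contained, does not presuppose invertibility (only self-adjointness of $\mathbf{Q}^\kappa$, which you verify), identifies the threshold $m_\kappa<\alpha$ with $-\kappa^2<\xi_\alpha$, and as a by-product delivers the positivity of $\alpha-\mathbf{Q}^\kappa$ for large $\kappa$, a fact the paper later has to cite separately (\cite[Lemma~5.5]{EK02}) in the proof of Theorem~\ref{thm:main}. The only points to keep an eye on are the near-diagonal estimate, which uses the (piecewise) bounded curvature implicit in the standing assumptions, and the Fourier convention behind your symbol $-\tfrac{1}{4\pi}\ln(p^2+\kappa^2)$; both are fine here.
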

\begin{proof}
It follows from (\ref{eq-checkT}) that
$$
\|\mathbf{\check{T}}^\kappa f \|^2 = \frac{1}{4 (2\pi )^3 } \int_{\R} \big(\ln (p^2 +\kappa^2)\big)^2\, |\hat{f}(p)|^2\, \mathrm{d}p\,
$$
and therefore for all $\kappa $ large enough we have
$$
\|\mathbf{T}^\kappa f \|^2 \geq C (\ln \kappa )^2 \|f\|^2
$$
with a suitable constant $C$. For the sake of simplicity we use the symbol $C$ as a generic positive constant which may vary case from case. Furthermore, by \cite[Lemma~5.3]{EK02} the operator $\mathbf{B}^\kappa $ belongs to the Hilbert--Schmidt class under assumption \eqref{a2}, and therefore, \emph{a fortiori}, if we assume \eqref{compact}. This allows to conclude that
$$
\left\| (\alpha -\mathbf{Q}^\kappa )f \right\|^2 =
\left( (\alpha -\mathbf{T}^\kappa -\mathbf{B}^\kappa)f,(\alpha  -\mathbf{T}^\kappa -\mathbf{B}^\kappa )f \right) \,
\geq
C (\ln \kappa )^2 \|f\|^2
$$
with another constant $C$. On the other hand, we know that operator  $(\alpha -\mathbf{Q}^\kappa )^{-1}$ exists and from the above
inequality we can conclude that it is bounded. \hfill$\Box$
\end{proof}

\section{Existence and Completeness of Wave Operators}\label{s:wave}
\setcounter{equation}{0}

Now we are able to pass to our main task in this note, namely the existence and completeness of the wave operators given by
$$
\Omega_\pm(H_{\alpha,\Gamma}, H_{\alpha,\Sigma}) := \textrm{s\,-}\!\!\!\lim_{t\to\pm\infty} \e^{iH_{\alpha,\Gamma}t}\, \e^{-iH_{\alpha,\Sigma}t}\,,
$$
where we can skip the projection $E_\mathrm{ac}(H_{\alpha,\Sigma})$ usually appearing in the definition because the spectrum of $H_{\alpha,\Sigma}$ is purely absolutely continuous as we have recalled above. For notational convenience, we decompose the line $\Sigma $ into three parts,
$$
\Sigma = \Sigma_M \cup \Sigma_+ \cup \Sigma_-\,,
$$
where $\Sigma_M := M\cap \Sigma$ and $\Sigma_\pm $ are the straight components of $\Sigma \setminus M$ which correspond, respectively, to $x_1 \to \pm \infty $ in the chosen coordinate system. Without loss of generality we may assume that $(0,0,0)\in M$ and 
$$
\Sigma_\pm =\{ x\,:\,x= (x_1, 0, 0),\;  x_1\in (-\infty , x_-) \cup
(x_+, \infty )\;\;\text{and}\;\; x_\pm \gtrless 0 \}\,.
$$
In a similar way one can dissect the curve $\Gamma$ into three parts, $\Gamma_M$ and $\Gamma_\pm =\Sigma_\pm $. The inverses of the `full' and `free' Birman--Schwinger operators,  $(\alpha -\mathbf{Q} ^\kappa )^{-1}$ and $(\alpha -\mathbf{T}^\kappa )^{-1}$, act respectively in $L^2 (\Gamma )$ and $L^2 (\Sigma )$. To compare the resolvents of $H_{\alpha, \Gamma }$ and $H_{\alpha, \Sigma}$ we introduce the following embeddings,
$$
(\alpha -\mathbf{Q}^\kappa ) ^{-1}_{\Gamma_i \Gamma_j}:= \chi _{ \Gamma_i}(\alpha -\mathbf{Q}^\kappa )^{-1} \chi _{ \Gamma_j}\,:\, L^2 (\Gamma_j) \to L^2 (\Gamma_i)\,,
$$
where $i,j=\pm,M$, $\:\chi _{ \Gamma_i}$ is the characteristic functions of $\Gamma_i$, and in the analogous way we define $(\alpha -\mathbf{T}^\kappa ) ^{-1}_{\Sigma_i \Sigma_j}$.
Let us now consider the resolvent difference
$$
\mathbf{R}^\kappa_{\alpha, \Gamma} -
\mathbf{R}^\kappa_{\alpha, \Sigma }=  \mathbf{\breve{R} }_\Gamma^\kappa(\alpha -\mathbf{Q}^\kappa )^{-1}  \mathbf{R }_\Gamma^\kappa - \mathbf{\breve{R} }_\Sigma^\kappa(\alpha -\mathbf{T}^\kappa )^{-1}  \mathbf{R }_\Sigma^\kappa\,.
$$
Using the obvious fact
$$
(\alpha -\mathbf{Q}^\kappa ) ^{-1}_{\Gamma_\pm  \Gamma_\pm } = (\alpha -\mathbf{T}^\kappa ) ^{-1}_{\Sigma _\pm  \Gamma_\pm }\,,
$$
we are coming to the conclusion that
\begin{equation}\label{eq-resol}
\mathbf{R}^\kappa_{\alpha, \Gamma} -
\mathbf{R}^\kappa_{\alpha, \Sigma }=  \sum_{i,j\in \mathcal{X}} \mathbf{\breve{R} }_{\Gamma_i}^\kappa(\alpha -\mathbf{Q}^\kappa )^{-1}_{\Gamma_i \Gamma_j }  \mathbf{R }_{\Gamma_j}^\kappa -\\
\sum_{i,j\in \mathcal{X}} \mathbf{\breve{R} }_{\Sigma _i}^\kappa(\alpha -\mathbf{T}^\kappa )^{-1}_{\Sigma_i \Sigma_j }  \mathbf{R }_{\Sigma_j}^\kappa\,,
\end{equation}
where $ \mathcal{X}:= \{(i,j)\,:\, i,j=+,-,M \,\wedge \,(i,j)\neq  (+,+), (-,-)\}$. This will be used to prove the following result:
\begin{theorem} \label{thm:main}
The operator $\mathbf{R}^\kappa_{\alpha, \Gamma} -\mathbf{R}^\kappa_{\alpha, \Sigma }$ belongs to the trace class for all $\kappa$ large enough.
\end{theorem}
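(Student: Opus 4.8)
The plan is to estimate each of the finitely many terms in the decomposition \eqref{eq-resol} separately and show that every one of them is trace class. Since the index set $\mathcal{X}$ excludes the pairs $(+,+)$ and $(-,-)$, every summand involves at least one factor associated with a \emph{bounded} piece of the curve or line, namely $\Gamma_M$ or $\Sigma_M$ (note that if $i=+$ then necessarily $j\in\{-,M\}$, and $\Gamma_+$ never interacts with $\Gamma_-$ through a localized kernel in a way that would spoil compactness — more on this below). The strategy is therefore: (i) show that the "Krein middle factors'' $(\alpha-\mathbf{Q}^\kappa)^{-1}_{\Gamma_i\Gamma_j}$ and $(\alpha-\mathbf{T}^\kappa)^{-1}_{\Sigma_i\Sigma_j}$ are bounded operators between the relevant $L^2$ spaces — this is immediate from the Lemma above together with the trivial boundedness of multiplication by characteristic functions; (ii) show that the "wing maps'' $\mathbf{R}^\kappa_{\Gamma_j}:L^2(\R^3)\to L^2(\Gamma_j)$ and its adjoint-type partner $\mathbf{\breve R}^\kappa_{\Gamma_i}:L^2(\Gamma_i)\to L^2(\R^3)$ enjoy Hilbert--Schmidt bounds whenever the index $i$ or $j$ is $M$, and suitable Schatten bounds in general; and (iii) compose these using the ideal property $\mathfrak{S}_1\cdot\mathcal{B}\subset\mathfrak{S}_1$ and, where needed, the Hölder inequality for Schatten norms $\|AB\|_{\mathfrak{S}_1}\le\|A\|_{\mathfrak{S}_2}\|B\|_{\mathfrak{S}_2}$.

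Concretely, I would first prove the key Hilbert--Schmidt estimate: for a bounded piece $\Gamma_M$ (a compact $C^1$ arc, hence of finite length), the operator $\mathbf{R}^\kappa_{\Gamma_M}:L^2(\R^3)\to L^2(\Gamma_M)$ has an integral kernel $G^\kappa(\Gamma(s)-x)$ with $s$ ranging over a bounded interval, and one computes
\begin{equation*}
\|\mathbf{R}^\kappa_{\Gamma_M}\|^2_{\mathfrak{S}_2}=\int_{\Gamma_M}\int_{\R^3}|G^\kappa(\Gamma(s)-x)|^2\,\mathrm{d}x\,\mathrm{d}s
=|\Gamma_M|\cdot\!\int_{\R^3}\frac{\e^{-2\kappa|y|}}{(4\pi|y|)^2}\,\mathrm{d}y=\frac{|\Gamma_M|}{8\pi\kappa}<\infty\,,
\end{equation*}
the spatial integral being convergent at the origin because in $\R^3$ the function $|y|^{-2}$ is locally integrable, and convergent at infinity thanks to the exponential decay for $\kappa>0$. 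The same computation gives $\|\mathbf{\breve R}^\kappa_{\Gamma_M}\|_{\mathfrak{S}_2}=\|\mathbf{R}^\kappa_{\Gamma_M}\|_{\mathfrak{S}_2}$ by duality. The identical bound holds for $\Sigma_M$. Thus any summand in \eqref{eq-resol} in which \emph{both} the leftmost and rightmost wing indices are $M$ is a product (Hilbert--Schmidt)$\times$(bounded)$\times$(Hilbert--Schmidt), hence trace class.

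The remaining summands are those where exactly one wing index equals $M$ while the other is $+$ or $-$, for instance the term $\mathbf{\breve R}^\kappa_{\Gamma_M}(\alpha-\mathbf{Q}^\kappa)^{-1}_{\Gamma_M\Gamma_+}\mathbf{R}^\kappa_{\Gamma_+}$. Here the factor $\mathbf{R}^\kappa_{\Gamma_+}$ is \emph{not} Hilbert--Schmidt ($\Gamma_+$ has infinite length), so I cannot close the argument by the crude two-out-of-three count. The way around this is to exploit more structure in the middle factor. Using the resolvent identity $(\alpha-\mathbf{Q}^\kappa)^{-1}=(\alpha-\mathbf{T}^\kappa)^{-1}+(\alpha-\mathbf{Q}^\kappa)^{-1}\mathbf{B}^\kappa(\alpha-\mathbf{T}^\kappa)^{-1}$, one rewrites $(\alpha-\mathbf{Q}^\kappa)^{-1}_{\Gamma_M\Gamma_+}=\chi_{\Gamma_M}(\alpha-\mathbf{Q}^\kappa)^{-1}\mathbf{B}^\kappa(\alpha-\mathbf{T}^\kappa)^{-1}\chi_{\Gamma_+}$ (the zeroth-order term $\chi_{\Gamma_M}(\alpha-\mathbf{T}^\kappa)^{-1}\chi_{\Gamma_+}$ vanishes because $\mathbf{T}^\kappa$ is a convolution operator on the line and $(\alpha-\mathbf{T}^\kappa)^{-1}$, being a Fourier multiplier, does not couple the disjoint supports $\Gamma_M$ and $\Gamma_+$ — actually one has to be slightly careful here since a multiplier is not local; the correct statement is rather that one inserts $\chi_{\Gamma_M}\mathbf{B}^\kappa=\mathbf{B}^\kappa$ up to the compactly supported part, because $\mathbf{B}^\kappa(s,s')$ is supported where $s$ or $s'$ lies in the deformation region, i.e.\ essentially in $M$). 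This is the crux: $\mathbf{B}^\kappa$ is Hilbert--Schmidt (by \cite[Lemma~5.3]{EK02} / assumption \eqref{compact}, in fact its kernel is supported in a bounded set and there it is square-integrable), so $(\alpha-\mathbf{Q}^\kappa)^{-1}_{\Gamma_M\Gamma_+}$ contains a Hilbert--Schmidt factor, and then the whole summand becomes (bounded)$\times$(Hilbert--Schmidt)$\times$(bounded)$\times$(bounded — this last being $\mathbf{R}^\kappa_{\Gamma_+}$, which is bounded since it is a restriction of the bounded operator $R^{-\kappa^2}$ followed by a trace onto $\Gamma$, with the requisite mapping property into $L^2(\Gamma)$ established as in \cite{EK02}), hence at worst Hilbert--Schmidt; but one actually needs trace class, so one must arrange \emph{two} Hilbert--Schmidt factors. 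This is achieved by splitting $\mathbf{B}^\kappa=|\mathbf{B}^\kappa|^{1/2}\,U\,|\mathbf{B}^\kappa|^{1/2}$ (polar decomposition) — each factor $|\mathbf{B}^\kappa|^{1/2}$ is Hilbert--Schmidt since $\mathbf{B}^\kappa$ is trace class (its kernel is continuous and compactly supported, so it is even nuclear) — thereby writing the summand as (bounded)$\cdot\mathfrak{S}_2\cdot$(bounded)$\cdot\mathfrak{S}_2\cdot$(bounded) $\subset\mathfrak{S}_1$. Summing the finitely many terms of \eqref{eq-resol} gives the claim.

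I expect the main obstacle to be exactly the handling of the "mixed'' terms with one infinite wing: one must justify carefully that the long-range tail $\Gamma_+$ (or $\Sigma_+$) contributes only a \emph{bounded}, not divergent, operator after it is composed with the $M$-localized pieces, and one must make precise the claim that $\mathbf{B}^\kappa$ — or more precisely the relevant compression of $(\alpha-\mathbf{Q}^\kappa)^{-1}$ minus the diagonal — is trace class rather than merely Hilbert--Schmidt, which requires knowing that $\mathbf{B}^\kappa$ has a continuous kernel supported in a compact set (this follows from \eqref{compact} and the smoothness of $G^\kappa$ away from the diagonal, together with the fact that $\Gamma$ and $\Sigma$ agree outside $M$ so the kernel $\mathbf{B}^\kappa(s,s')=G^\kappa(\Gamma(s)-\Gamma(s'))-G^\kappa(s-s')$ vanishes unless $s$ or $s'$ belongs to the bounded parameter interval corresponding to $M$). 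Everything else — the $\mathfrak{S}_2$ estimates on the free-resolvent wings, the boundedness of the Krein factors from the Lemma, the Schatten--Hölder bookkeeping — is routine.
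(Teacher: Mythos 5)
Your Schatten-class route settles only those summands of \eqref{eq-resol} in which \emph{both} wings carry the index $M$: there your Hilbert--Schmidt computation for $\mathbf{R}^\kappa_{\Gamma_M}$ is correct (and consistent with Lemma~\ref{le-G2}), and (HS)$\times$(bounded)$\times$(HS) does give trace class. Everywhere else there are genuine gaps. First, the index set $\mathcal{X}$ contains $(+,-)$ and $(-,+)$, for which \emph{neither} wing is attached to a bounded piece; your opening claim that every summand involves $\Gamma_M$ or $\Sigma_M$ is false, and these terms are never actually treated. Second, the rescue of the mixed terms rests on two localization statements that fail: $(\alpha-\mathbf{T}^\kappa)^{-1}$ is a Fourier multiplier, hence non-local, so the zeroth-order term $\chi_{\Gamma_M}(\alpha-\mathbf{T}^\kappa)^{-1}\chi_{\Gamma_+}$ does not vanish; and the kernel $\mathbf{B}^\kappa(s,s')$ is \emph{not} supported near the deformation region --- for $s$ in one tail and $s'$ in the other one has $|\Gamma(s)-\Gamma(s')|<|s-s'|$ strictly (the chord is shorter than the detour through $M$), so $\mathbf{B}^\kappa(s,s')>0$ there, merely exponentially small. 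Third, the zeroth-order terms you would be left with are of exactly the same nature as the pure-$\Sigma$ summands of \eqref{eq-resol}, for which no $\mathbf{B}^\kappa$ is available at all; for terms such as $\mathbf{\breve{R}}^\kappa_{\Sigma_M}(\alpha-\mathbf{T}^\kappa)^{-1}_{\Sigma_M\Sigma_+}\mathbf{R}^\kappa_{\Sigma_+}$ or $\mathbf{\breve{R}}^\kappa_{\Sigma_+}(\alpha-\mathbf{T}^\kappa)^{-1}_{\Sigma_+\Sigma_-}\mathbf{R}^\kappa_{\Sigma_-}$ your bookkeeping produces at most one Hilbert--Schmidt factor, i.e.\ $\mathfrak{S}_2$ rather than $\mathfrak{S}_1$. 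Finally, the assertion that a continuous compactly supported kernel is automatically trace class (``even nuclear'') is false in general; what \cite[Lemma~5.3]{EK02} actually provides is only that $\mathbf{B}^\kappa$ is Hilbert--Schmidt.

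The paper closes precisely this gap by a different mechanism: it does not attempt to make the infinite wings Hilbert--Schmidt. Composing the two Green-function wings via Lemma~\ref{le-G2} turns each summand, after the ball cut-off $\chi_{\mathcal{B}_\delta}$, into a sandwich of the bounded block $(\alpha-\mathbf{Q}^\kappa)^{-1}_{\Gamma_i\Gamma_j}$ (or its $\mathbf{T}$-analogue) between functions proportional to $\e^{-\kappa|\cdot|}$ living on $\Gamma_i$ and $\Gamma_j$; the geometric input is that for the pairs $(i,j)\in\mathcal{X}$ the composed kernel $\e^{-\kappa|y-z|}$ decays in each variable separately --- e.g.\ $|y-z|=|y|+|z|$ for $y\in\Sigma_+$, $z\in\Sigma_-$, while $\Gamma_M$ is bounded --- so the diagonal integral remains finite as $\delta\to\infty$, and the positivity of $\alpha-\mathbf{Q}^\kappa$ for large $\kappa$ together with the lemma following Theorem~XI.31 of \cite{RS} identifies that limit with the trace. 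If you wish to keep a purely Schatten-theoretic argument, the missing ingredient you must supply is exactly this: a two-sided $\mathfrak{S}_2$-factorization of the composed operators with infinite wings, extracted from the exponential decay of the kernel in Lemma~\ref{le-G2}, not from any alleged compact support of $\mathbf{B}^\kappa$.
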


Let us start from an auxiliary claim:
\begin{lemma} \label{le-G2}
We have
\begin{equation}\label{eq-appen}
  \int_{\R^3} G^\kappa (y-x) G^\kappa (x-z)\,\mathrm{d}x = \frac{1}{8\pi \kappa }\, \e^{-\kappa |y-z|}\,.
\end{equation}
\end{lemma}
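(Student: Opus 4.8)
The plan is to recognise the left-hand side as the convolution $(G^\kappa\ast G^\kappa)(y-z)$, i.e.\ the integral kernel of $(\mathbf{R}^\kappa)^2=(-\Delta+\kappa^2)^{-2}$, and then to evaluate that kernel by differentiating the free resolvent with respect to the spectral parameter. This reduces the lemma to an elementary one-line computation.

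First I would record the two integrability facts $\int_{\R^3}G^\kappa(x)\,\mathrm{d}x=\kappa^{-2}$ and $\|G^\kappa\|_{L^2}^2=(8\pi\kappa)^{-1}$, which follow from the explicit Yukawa form of $G^\kappa$ by passing to radial coordinates. In particular $G^\kappa\in L^1(\R^3)\cap L^2(\R^3)$, so by Young's inequality the integral on the left-hand side converges absolutely and defines a continuous, translation-invariant function of $y-z$; hence it is enough to treat the case $z=0$. Since $\mathbf{R}^\kappa$ is the integral operator with kernel $G^\kappa(\cdot-\cdot)$, the operator $(\mathbf{R}^\kappa)^2$ has kernel $(G^\kappa\ast G^\kappa)(\cdot-\cdot)$, and the assertion is equivalent to saying that $(-\Delta+\kappa^2)^{-2}$ has the kernel $\frac{1}{8\pi\kappa}\,\e^{-\kappa|y-z|}$. (As a sanity check, at $y=z$ this gives $(8\pi\kappa)^{-1}=\|G^\kappa\|_{L^2}^2$, consistent with the $L^2$ norm above.)

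The key step is the identity $(-\Delta+\kappa^2)^{-2}=-\tfrac{1}{2\kappa}\,\tfrac{\mathrm{d}}{\mathrm{d}\kappa}(-\Delta+\kappa^2)^{-1}$, obtained by differentiating $(-\Delta+\kappa^2)^{-1}$ in $\kappa$ (equivalently, differentiating the Fourier multiplier $(p^2+\kappa^2)^{-1}$, so that $\widehat{G^\kappa\ast G^\kappa}$ becomes $(p^2+\kappa^2)^{-2}=-\tfrac{1}{2\kappa}\partial_\kappa(p^2+\kappa^2)^{-1}$). At the level of kernels this yields $(G^\kappa\ast G^\kappa)(x)=-\tfrac{1}{2\kappa}\,\partial_\kappa\big(\tfrac{\e^{-\kappa|x|}}{4\pi|x|}\big)=-\tfrac{1}{2\kappa}\cdot\big(-\tfrac{\e^{-\kappa|x|}}{4\pi}\big)=\tfrac{\e^{-\kappa|x|}}{8\pi\kappa}$, which is exactly \eqref{eq-appen}.

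The only point needing care, and the closest thing to an obstacle in what is otherwise a routine computation, is justifying the interchange of $\tfrac{\mathrm{d}}{\mathrm{d}\kappa}$ with the convolution integral (equivalently, the legitimacy of differentiating under the integral in the Fourier representation). This is handled by dominated convergence: for $\kappa$ restricted to a compact subinterval of $(0,\infty)$, both $G^\kappa(y-x)G^\kappa(x)$ and its $\kappa$-derivative are bounded, uniformly in $\kappa$, by a fixed $L^1(\R^3)$ function of $x$, using the exponential decay of $G^\kappa$ and the bounds above. Should one prefer to avoid differentiation entirely, the convolution can instead be computed directly in prolate spheroidal (bipolar) coordinates with foci at $y$ and $z$, but the parameter-differentiation argument is shorter and I would present that.
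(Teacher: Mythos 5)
Your computation is correct, but it follows a genuinely different route from the paper. You identify the left-hand side as the kernel of $(\mathbf{R}^\kappa)^2=(-\Delta+\kappa^2)^{-2}$ and evaluate it by the parameter differentiation $(-\Delta+\kappa^2)^{-2}=-\frac{1}{2\kappa}\frac{\mathrm{d}}{\mathrm{d}\kappa}(-\Delta+\kappa^2)^{-1}$, so that $(G^\kappa\ast G^\kappa)(x)=-\frac{1}{2\kappa}\partial_\kappa\frac{\e^{-\kappa|x|}}{4\pi|x|}=\frac{\e^{-\kappa|x|}}{8\pi\kappa}$; the paper instead passes to the Fourier representation, reduces the claim to the integral $\int_{\R^3}\e^{ipy}(p^2+\kappa^2)^{-2}\,\mathrm{d}p$, and evaluates it by angular integration, closing the contour in the upper half-plane and computing the residue at the double pole $z=i\kappa$. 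Your argument is shorter and avoids complex analysis entirely, at the price of the one interchange you flag; the paper's residue computation is a self-contained direct evaluation needing no differentiation in the parameter. One remark on your justification step: differentiating under the integral in ``the Fourier representation'' of $G^\kappa$ is slightly delicate, since $(p^2+\kappa^2)^{-1}$ is not integrable in $\R^3$ and that Fourier integral is only conditionally convergent; a cleaner way to close the argument within your scheme is either to use the first resolvent identity at the kernel level, $G^{\kappa'}(y-z)-G^{\kappa}(y-z)=(\kappa^2-\kappa'^2)\int_{\R^3}G^{\kappa'}(y-x)G^{\kappa}(x-z)\,\mathrm{d}x$, divide by $\kappa^2-\kappa'^2$ and let $\kappa'\to\kappa$ using dominated convergence on the right, or simply to verify by a radial computation that $u(x)=\frac{\e^{-\kappa|x|}}{8\pi\kappa}$ is a bounded continuous solution of $(-\Delta+\kappa^2)u=G^\kappa$, hence equals $\mathbf{R}^\kappa G^\kappa=G^\kappa\ast G^\kappa$. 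With either of these small repairs your proof is complete and fully rigorous.
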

\begin{proof}
We use the Fourier representation of the Green function
$$
\int_{\R^3} G^\kappa (y-x)\, f(x)\,\mathrm{d}x =\frac{1}{(2\pi )^{3/2}} \int_{\R^3}\frac{\e^{ipy}}{p^2 +\kappa ^2}\, \hat{f}(p)\, \mathrm{d}p
$$
and apply it to the Green function itself,
$$
\int_{\R^3} G^\kappa (y-x) G^\kappa (x-z)\,\mathrm{d}x =
\frac{1}{(2\pi )^3} \int_{\R^3}\frac{\e^{ip(y-z)}}{(p^2 +\kappa ^2)^2}  \,\mathrm{d}p\,.
$$
Performing the integration over angles in the integral on the right-hand side we get
$$
I=\int_{\R^3}\frac{\e^{ipy}}{(p^2 +\kappa ^2)^2}\,  \mathrm{d}p=
\frac{4\pi }{|y|} \int_0^\infty \frac{p\,\sin p|y|}{(p^2+\kappa^2)^2}\,\mathrm{d}p =
\frac{2\pi }{i|y|} \int_\R \frac{p\,\e^{ip|y|}}{(p^2+\kappa^2)^2}\,\mathrm{d}p\,.
$$
Next we extend in a standard way the integration forming the contour by adding upper semicircle and using the Jordan's lemma which implies that the integral over the semicircle vanishes in the limit of infinite radius. This gives
$$
I= \frac{2\pi }{i|y|} \oint \frac{z\,\e^{iz|y|}}{(z^2+\kappa^2)^2}\,\mathrm{d}z =
\frac{4\pi ^2 }{|y|}\sum_{upper \,\,halfplane } \mathrm{Res} \,\frac{z\e^{iz|y|}}{(z^2 +\kappa^2)^2}\,.
$$
Using now the generalized Cauchy integral formulae one gets
$$
I= \frac{4\pi ^2 }{|y|} \left(\frac{d}{dz}\frac{z\e^{iz|y|} }{(z+i\kappa)^2}\right)\Big|_{z=i\kappa}= \frac{\pi^2 }{\kappa }\,\e^{-\kappa |y|}\,.
$$
Putting these results together we arrive at the formula (\ref{eq-appen}). \hfill$\Box$
\end{proof}

\medskip

\emph{Proof of Theorem~\ref{thm:main}.}
Let us pick one of the components of $\mathbf{R}^\kappa_{\alpha, \Gamma} - \mathbf{R}^\kappa_{\alpha, \Sigma }$, for instance, $\mathbf{\breve{R} }_{\Gamma_+}^\kappa(\alpha -\mathbf{Q}^\kappa )^{-1}_{\Gamma_+ \Gamma_- }  \mathbf{R }_{\Gamma_-}^\kappa $. The symbol $\mathcal{B}_\delta $ will conventionally denote the ball of radius $\delta $ centered at the origin and $\chi_{\mathcal{B}_\delta}$ stands for the characteristic function of the ball $\mathcal{B}_\delta$. We define the `cut-off' operator family
$$
S_\delta^{+-} \equiv S_\delta  := \chi_{\mathcal{B}_\delta}\left( \mathbf{\breve{R} }_{\Gamma_+}^\kappa(\alpha -\mathbf{Q}^\kappa )^{-1}_{\Gamma_+ \Gamma_- }  \mathbf{R }_{\Gamma_-}^\kappa \right) \chi_{\mathcal{B}_\delta}
$$
and ask for its weak limit as $\delta \to \infty$. We have, in particular,
\begin{equation} \label{eq-trace}
\int_{\R^3 } S_\delta  (x,x)\mathrm{d}x = \int_{\R^3 } \left( G^\kappa (\cdot -x)\chi_{\mathcal{B}_\delta }(x) ,
(\alpha -\mathbf{Q}^\kappa )^{-1}_{\Gamma_+ \Gamma_- } G^\kappa (\cdot -x)\chi_{\mathcal{B}_\delta } (x)
\right)_{L^2 (\Gamma_+)} \,\mathrm{d}x\,.
\end{equation}
Using now the Lebesgue's dominated convergence theorem in combination with Lemma~\ref{le-G2} and the equivalence that $\e^{-\kappa |x-y|} = \e^{-\kappa (|x   |+|y|)}$ holds, in view of that fact that $\Gamma_\pm = \Sigma_\pm =\{ x\,:\,x= (x_1, 0, 0),\;  x_1\in (-\infty , x_-) \cup
(x_+, \infty )\;\;\text{and}\;\; x_\pm \gtrless 0 \}$, we obtain
\begin{eqnarray} \label{eq-psidelta}
\lim_{\delta \to \infty}\int_{\R^3 } S_\delta  (x,x)\,\mathrm{d}x =
\frac{\pi ^4}{\kappa^2 }\int_{\Gamma_+ } e^{-\kappa |s |} \left( (\alpha -\mathbf{Q}^\kappa )^{-1}_{\Gamma_+ \Gamma_- }  e^{-\kappa |\cdot   |} \right) (s)\,\mathrm{d}s \,.
\end{eqnarray}
Using further the boundedness of $(\alpha -\mathbf{Q}^\kappa )^{-1}_{\Gamma_+ \Gamma_- }$ in combination with Schwarz inequality we get from (\ref{eq-psidelta}) the following estimate
$$
\lim_{\delta\to\infty}\int_{\R^3 } S_\delta  (x,x) \mathrm{d}x\leq C \frac{\pi ^4}{\kappa^2 } \|e^{-\kappa |\cdot |}\|_{L^2 (\Gamma_+)} \|e^{-\kappa |\cdot |}\|_{L^2 (\Gamma_-)}\,,
$$
where the constant $C$ is the norm of $(\alpha -\mathbf{Q}^\kappa )^{-1}_{\Gamma_+ \Gamma_- }$. We want to conclude that
$$ \int_{\R^3 } S_\delta  (x,x)\mathrm{d}x \to \mathrm{Tr} \,\,\mathbf{\breve{R} }_{\Gamma_+}^\kappa(\alpha -\mathbf{Q}^\kappa )^{-1}_{\Gamma_+ \Gamma_- }  \mathbf{R }_{\Gamma_-}^\kappa
$$
as $\delta\to\infty$ and to show in this way that the operator $\mathbf{\breve{R} }_{\Gamma_+}^\kappa(\alpha -\mathbf{Q}^\kappa )^{-1}_{\Gamma_+ \Gamma_- }  \mathbf{R }_{\Gamma_-}^\kappa$ belongs to the trace class. According to the lemma following Theorem~XI.31 in \cite{RS} the trace can be expressed through the integral of the kernel diagonal provided the latter is continuous in both arguments and the operator is positive. The continuity was mentioned already, the positivity follows from the fact that the operator $\alpha -\mathbf{Q}^\kappa$ is positive from all $\kappa$ large enough, cf. \cite[Lemma~5.5]{EK02}.

Let us next consider the component of~(\ref{eq-resol}) referring to the operator acting between the spaces $L^2 (\Gamma_+)$ and $L^2 (\Gamma_M)$. We put
$$
S_\delta^{M+} \equiv S_\delta  := \chi_{\mathcal{B}_\delta}\left( \mathbf{\breve{R} }_{\Gamma_M}^\kappa(\alpha -\mathbf{Q}^\kappa )^{-1}_{\Gamma_M \Gamma_+ }  \mathbf{R }_{\Gamma_+ }^\kappa \right) \chi_{\mathcal{B}_\delta}\,.
$$
Applying again Lemma~\ref{le-G2} in combination with the Lebesgue's dominated convergence theorem one obtains
\begin{equation} \label{eq-psideltaa}
\lim_{\delta \to\infty}\int_{\R^3 } S_\delta  (x,x)\,\mathrm{d}x =
\frac{\pi ^4}{\kappa^2 }\int_{\Gamma_M }  \left( (\alpha -\mathbf{Q}^\kappa )^{-1}_{\Gamma_M \Gamma_+}  \e^{-\kappa |\Gamma (\cdot) - \Gamma (s)  |} \right) \mathrm{d}s \,.
\end{equation}
Using further the boundedness of  $(\alpha -\mathbf{Q}^\kappa )^{-1}_{\Gamma_M \Gamma_+}$ and the continuous imbedding of spaces $L^2 (\Gamma_M) \hookrightarrow L^1 (\Gamma_M) $ together with the Fubini's theorem and Schwarz inequality we infer that
\begin{eqnarray} \label{eq-psideltaaa}
  \nonumber
\Big| \lim_{\delta \to \infty  }\int_{\R^3 } S_\delta  (x,x)\,\mathrm{d}x \Big| \leq
\frac{\pi ^4}{\kappa^2 }\int_{\Gamma_M } \Big| \left( (\alpha -\mathbf{Q}^\kappa )^{-1}_{\Gamma_M \Gamma_+}  \e^{-\kappa |\Gamma (\cdot) - \Gamma (s)  |} \right) \Big| \mathrm{d}s
\\ \label{eq-aux2}
\leq \frac{\pi ^4}{\kappa^2 }|\Gamma_M |\left(  \int_{\Gamma_M}    \left| (\alpha -\mathbf{Q}^\kappa )^{-1}_{\Gamma_M \Gamma_+}
\e^{-\kappa |\Gamma (\cdot ) - \Gamma (s)  |} \right|^2 \mathrm{d}s \right)^{1/2}\,.
\end{eqnarray}
The integral on the right-hand side of~(\ref{eq-aux2}) is finite because the integrated function belongs to $L^2 (\Gamma_M)$. This implies that $\mathbf{\breve{R} }_{\Gamma_M}^\kappa(\alpha -\mathbf{Q}^\kappa )^{-1}_{\Gamma_M \Gamma_+ }  \mathbf{R }_{\Gamma_+ }^\kappa$ belongs to the trace class in the same way as above.

The remaining components of $\mathbf{R}^\kappa_{\alpha, \Gamma} - \mathbf{R}^\kappa_{\alpha, \Sigma }$ contributing to formula~(\ref{eq-resol}) can be dealt with in the analogous way. The only terms which do not allow for such a treatment are those containing $(\alpha -\mathbf{Q}^\kappa )^{-1}_{\Gamma_\pm  \Gamma_\pm }$, however, they cancel when we subtract $\mathbf{R}^\kappa_{\alpha, \Gamma}$ from $\mathbf{R}^\kappa_{\alpha, \Sigma }$. Concluding the above discussion we thus find that the difference $\mathbf{R}^\kappa_{\alpha, \Gamma} - \mathbf{R}^\kappa_{\alpha, \Sigma }$ is a trace class operator for all $\kappa$ large enough what we have set out to prove. \hfill$\Box$

\medskip

Now we are in position to present the result indicated in the introduction:

\begin{corollary}
In the stated assumptions, the wave operators $\Omega_\pm(H_{\alpha,\Gamma}, H_{\alpha,\Sigma})$ exist and are complete.
\end{corollary}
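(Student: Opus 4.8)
The plan is to obtain the corollary from Theorem~\ref{thm:main} by the Kato--Birman machinery of trace-class scattering theory. Fix $\kappa>0$ large enough so that the conclusion of Theorem~\ref{thm:main} holds and, at the same time, so that $-\kappa^2$ lies below $\inf\sigma(H_{\alpha,\Sigma})=\xi_\alpha$ and below $\inf\sigma(H_{\alpha,\Gamma})$; this is possible because both operators are bounded from below --- for $H_{\alpha,\Gamma}$ this follows since under \eqref{compact} its essential spectrum equals $[\xi_\alpha,\infty)$ while its spectrum below $\xi_\alpha$ is discrete and therefore cannot accumulate at $-\infty$. Introduce the function $\varphi(\lambda):=(\lambda+\kappa^2)^{-1}$, which is real-valued, bounded, strictly decreasing and $C^\infty$ on a neighbourhood of $\sigma(H_{\alpha,\Sigma})\cup\sigma(H_{\alpha,\Gamma})$; with this notation one has $\varphi(H_{\alpha,\Gamma})=\mathbf{R}^\kappa_{\alpha,\Gamma}$ and $\varphi(H_{\alpha,\Sigma})=\mathbf{R}^\kappa_{\alpha,\Sigma}$.

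By Theorem~\ref{thm:main} the difference $\varphi(H_{\alpha,\Gamma})-\varphi(H_{\alpha,\Sigma})$ belongs to the trace class. First I would invoke the Birman--Kato theorem on trace-class perturbations of the resolvent, in the form recorded in \cite[Sec.~XI.3]{RS}: since the self-adjoint pair $(H_{\alpha,\Gamma},H_{\alpha,\Sigma})$ has a trace-class resolvent difference, the wave operators $\Omega_\pm(\varphi(H_{\alpha,\Gamma}),\varphi(H_{\alpha,\Sigma}))$ exist and are complete. Then I would pass back to the original pair by the invariance principle for wave operators: as $\varphi$ is an admissible function in the above sense, $\Omega_\pm(H_{\alpha,\Gamma},H_{\alpha,\Sigma})$ also exist and are complete, up to at most an interchange $\Omega_\pm\leftrightarrow\Omega_\mp$ stemming from the fact that $\varphi$ is monotone decreasing --- an interchange which is immaterial for the claim. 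Alternatively one may avoid the invariance principle altogether by noting that trace-class-ness of the resolvent difference at one point of the common resolvent set forces it at every such point, in particular at $\lambda=\mathrm{i}$, and then apply the resolvent version of the Kato--Rosenblum theorem directly.

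It remains only to recall that the spectral projection $E_{\mathrm{ac}}(H_{\alpha,\Sigma})$ occurring in the usual definition of $\Omega_\pm$ may be suppressed here because $\sigma(H_{\alpha,\Sigma})$ is purely absolutely continuous, as noted in Section~\ref{s:prelim}. I do not expect any genuine obstacle at this step: the entire analytic weight of the problem --- namely controlling the geometric perturbation carried by $\mathbf{B}^\kappa$, showing that the off-diagonal pieces of \eqref{eq-resol} are of trace class, and observing that the potentially dangerous diagonal terms $(\alpha-\mathbf{Q}^\kappa)^{-1}_{\Gamma_\pm\Gamma_\pm}$ cancel in the subtraction --- has already been absorbed into Theorem~\ref{thm:main}. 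The only points demanding a little care are the elementary ones flagged above: that $\kappa$ can indeed be taken below both spectra, and that $\varphi$ meets the hypotheses of the invariance principle.
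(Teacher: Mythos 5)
Your argument is correct and is in substance the same as the paper's: the paper deduces the corollary from Theorem~\ref{thm:main} by a direct appeal to the Kuroda--Birman theorem \cite[Thm.~XI.9]{RS}, which is precisely the statement you re-derive by applying the Kato--Rosenblum theorem to the resolvent pair and then the invariance principle (your ``alternative'' route is exactly the cited theorem). The extra precautions you discuss --- choosing $-\kappa^2$ below both spectra and checking admissibility of $\varphi$ --- are harmless but not needed once the resolvent difference is trace class at one point of the common resolvent set.
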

\begin{proof}
In view of Theorem~\ref{thm:main} the claim follows immediately from Kuroda-Birman theorem, cf.~\cite[Thm.~XI.9]{RS}. \hfill$\Box$
\end{proof}

\subsection*{Acknowledgements}
The research was supported by the Czech Science Foundation (GA\v{C}R) within the project 17-01706S.

\vspace{10mm}

\end{document}